\documentclass[11pt,letterpaper]{article}

\usepackage[margin=1in]{geometry}
\usepackage[T1]{fontenc}
\usepackage[utf8]{inputenc}
\usepackage{amsmath,amssymb,amsthm,mathtools}
\usepackage{newtxtext,newtxmath}
\usepackage{microtype}
\usepackage[hyphens]{url}
\usepackage{xurl}
\usepackage[authoryear,round]{natbib}
\usepackage{enumitem}
\usepackage{hyperref}

\newcommand{\ArxivAuthorName}{Hiroaki Odahara}
\newcommand{\ArxivAffiliation}{%
Market Design Center, Graduate School of Economics, The University of Tokyo\\
Graduate School of Informatics and Engineering, The University of Electro-Communications}
\newcommand{\ArxivDate}{July 2026}

\hypersetup{
  colorlinks=true,
  linkcolor=black,
  citecolor=black,
  urlcolor=black,
  pdftitle={The Targeted-Loss Exposure Frontier in Auctions},
  pdfauthor={\ArxivAuthorName},
  pdfsubject={Auction theory and mechanism design},
  pdfkeywords={revenue equivalence, targeted-loss exposure, pay-to-bid auctions, shill bidding, linear programming, silent war of attrition}
}

\newtheorem{theorem}{Theorem}[section]
\newtheorem{proposition}[theorem]{Proposition}
\newtheorem{lemma}[theorem]{Lemma}
\newtheorem{corollary}[theorem]{Corollary}

\newcommand{\E}{\mathbb{E}}
\newcommand{\TLE}{\operatorname{TLE}}
\newcommand{\WOA}{\mathrm{WOA}}
\newcommand{\AP}{\mathrm{AP}}
\newcommand{\dd}{\,\mathrm{d}}

\title{The Targeted-Loss Exposure Frontier in Auctions}
\author{\ArxivAuthorName\\[0.35em]
\small \ArxivAffiliation}
\date{\ArxivDate}

\begin{document}
\maketitle

\begin{abstract}
Pay-to-bid auctions charge participants before allocation and therefore make losing-side payments vulnerable to seller intervention. This paper introduces targeted-loss exposure, a stress test that records a designated bidder's payment when one rival's equilibrium report is fixed at an arbitrarily high level while her type distribution is preserved. Under nonnegative payments and an ordering in which a winner pays at least as much as a loser, revenue equivalence yields a sharp two-bidder bound attained by the silent war of attrition. For any number of bidders, the same format attains the upper bound among payment-ordered rank-local rules. The proof formulates payment location as a linear program and constructs a dual probability measure. With more than two bidders, complementary slackness determines the optimal loser-payment schedule almost everywhere. Winner-pay auctions have zero exposure, standard all-pay auctions lie strictly below the frontier, and frontier exposure decreases as the number of bidders rises. A loser-only rule shows that removing payment ordering can make exposure unbounded. A common-shock extension also shows that anticipating intervention changes bids but preserves the ordering of standard formats. The results separate the total interim payment fixed by revenue equivalence from the winning or losing state to which that payment is attached.
\end{abstract}

\noindent\textbf{Keywords:} revenue equivalence; targeted-loss exposure; pay-to-bid auctions; shill bidding; linear programming; silent war of attrition

\medskip
\noindent\textbf{JEL classification:} D18; D44; D47; D82; L14

\medskip
\section{Introduction}

In pay-to-bid (penny) auctions, participants sink a fee with every bid before allocation is resolved \citep{ByersEtAl2010,Augenblick2016,Hinnosaar2016}. In March 2011, the Consumer Affairs Agency issued orders to three Japanese penny-auction operators over misleading representations concerning auction content and transaction terms \citep{CAA2011}. In a separate December 2012 case, police arrested operators on suspicion of fraud, alleging that automated fictitious bids prevented genuine participants from winning while bid fees accumulated \citep{ITmedia2012}. Together these episodes motivate an architectural question that remains current in work on credible auctions and seller manipulation \citep{AkbarpourLi2020,PorterShoham2005,KomoEtAl2025}: how much of a participant's payment sits on the losing side, exposed to an organizer who can influence who loses?

This note isolates that quantity. The targeted-loss experiment fixes one rival's report at an increasingly high equilibrium level---a static shill-bid stress test---while preserving the designated bidder's type distribution; targeted-loss exposure records her payment in states where that report guarantees her loss. Unlike equilibrium loser revenue, it does not weight each type by her equilibrium loss probability. It is a payment diagnostic, not a model of fraud, reputation, or bidder naivety. A common-shock extension lets bidders anticipate forced loss with any positive probability below certainty; bids adjust without reversing standard-format exposure rankings.

With two bidders, nonnegative payments, payment ordering, and revenue equivalence together bound targeted-loss payment pointwise by interim payment. At the top-rival limit, exposure is at most expected value. The silent war of attrition---the sealed-bid second-price all-pay auction in which losers pay their own bids and the winner pays the highest losing bid---attains the bound; without payment ordering, loser-only exposure can be unbounded. The bridge to dynamic penny auctions is architectural rather than temporal. Bid fees accumulate along a participant's own bidding history before allocation, while the silent war of attrition compresses that account into the loser's own bid. A high rival report therefore guarantees loss without erasing the target's committed payment. For any bidder count, the main result identifies maximal exposure among payment-ordered rank-local rules. Writing exposure as the average loser fee identifies the objective, not its feasible set: revenue equivalence and payment ordering jointly constrain how each type's payment budget can be split between winning and losing states. The dual argument establishes attainment and, with more than two bidders, almost-everywhere uniqueness of the loser-payment schedule.

\citet{KrishnaMorgan1997} compare the silent war of attrition and the standard all-pay auction, while \citet{MurtoValimaki2017} study the former with many bidders. The present result instead treats payment location as a design variable and characterizes the silent war of attrition as maximally exposed within the payment-ordered rank-local class. Related criteria include ambiguity-robust transfers \citep{HwangKohBaik2026}, seller-revenue variance \citep{BojkoEtAl2024}, and shill-proofness against seller profit incentives \citep{KomoEtAl2025}. Here the criterion is targeted-loss exposure; formats that charge only winners provide the zero-exposure benchmark.

\section{Targeted-loss exposure and the two-bidder bound}

Let $n\ge2$ risk-neutral bidders have quasilinear utilities and values $V_1,\ldots,V_n$ drawn independently from a distribution $F$ on $[0,1]$ with continuous positive density $f$; write $V\sim F$ for a generic draw. All payment functions below are conditional expected payments given the report profile. Deterministic transfers are a special case; for randomized transfers, nonnegativity and payment ordering are imposed on these conditional expectations. Let $M$ be a symmetric $n$-bidder auction in which the highest report wins, and fix a symmetric strictly increasing equilibrium reporting strategy $\beta$ on $[0,1)$. Designate bidder $i$ and one rival $j$. Let $r\in(0,1)$ index the rival type whose equilibrium report is imposed. The targeted-loss experiment fixes bidder $j$'s report at $\beta(r)$, while bidder $i$ and all remaining rivals report according to $\beta$. Because $\beta$ is strictly increasing, the fixed report guarantees bidder $i$'s loss whenever $V_i<r$. Let $P_i^{M,r}$ denote bidder $i$'s random payment in this experiment. Define
\begin{equation}
\begin{aligned}
L_n^{\mathrm{TL}}(M;r)
&:=\E\!\left[P_i^{M,r}\mathbf 1\{V_i<r\}\right],\\
\TLE_n(M)
&:=\limsup_{r\uparrow1}L_n^{\mathrm{TL}}(M;r).
\end{aligned}
\label{eq:tledef}
\end{equation}
Here $L_n^{\mathrm{TL}}(M;r)$ is the targeted-loss payment at finite threshold $r$. Its upper limit as the imposed rival type approaches the top of the support is targeted-loss exposure, $\TLE_n(M)$. The expectation is over the unfixed bidders' values and any mechanism randomization. The indicator $V_i<r$ selects states in which the fixed report guarantees bidder $i$'s loss, so the expectation is interventional rather than conditional on equilibrium loss. The limsup permits divergent top bids with finite expected payments.

We first set $n=2$, with bidder 1 designated and bidder 2 fixed. Writing $b_i(x_1,x_2)$ for bidder $i$'s conditional expected payment at report profile $(x_1,x_2)$, Equation~\eqref{eq:tledef} becomes
\begin{equation}
L_2^{\mathrm{TL}}(M;r)=\int_0^r b_1\!\left(\beta(v),\beta(r)\right)\dd F(v).
\label{eq:tle2}
\end{equation}

For the two-bidder analysis, call such an auction admissible if payments are nonnegative, the winner pays weakly more than the loser at every report profile, and the lowest type obtains zero expected utility. Efficiency and the envelope formula fix type $r$'s expected payment conditional on her type---the interim payment \citep{Myerson1981,RileySamuelson1981}:
\begin{equation}
p(r)=rF(r)-\int_0^rF(t)\dd t.
\label{eq:p2}
\end{equation}

\begin{theorem}[Pointwise bound]\label{thm:pointwise}
For every admissible auction and every $r\in(0,1)$,
\[
L_2^{\mathrm{TL}}(M;r)\le p(r).
\]
Consequently, $\TLE_2(M)\le p(1)=\E[V]$, and the silent war of attrition attains the top-type bound.
\end{theorem}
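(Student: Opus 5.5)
Rewrite the targeted-loss payment as bidder~2's own payment, then use type $r$'s interim payment as a budget. By symmetry of $M$, $b_1(x,y)=b_2(y,x)$, so
\[
L_2^{\mathrm{TL}}(M;r)=\int_0^r b_2\bigl(\beta(v),\beta(r)\bigr)\dd F(v).
\]
In equilibrium, bidder~2 of type $r$ reports $\beta(r)$ and faces a rival of type $V_1$. She wins exactly when $V_1<r$, since $\beta$ is strictly increasing and ties have probability zero. In those states the integrand is precisely her \emph{winning} payment. So the targeted-loss payment against bidder~1 equals the part of type $r$'s interim payment collected in her winning states:
\[
p(r)=\int_0^r b_2\bigl(\beta(v),\beta(r)\bigr)\dd F(v)+\int_r^1 b_2\bigl(\beta(v),\beta(r)\bigr)\dd F(v).
\]

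The bound then follows from nonnegativity alone. The second integral is nonnegative, so $L_2^{\mathrm{TL}}(M;r)\le p(r)$. Here $p(r)$ is the expression in \eqref{eq:p2}, which is fixed by revenue equivalence: the allocation is efficient and the lowest type's utility is zero.

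Next I take limits. Since $F$ is continuous, $p$ is continuous on $[0,1]$, so $\limsup_{r\uparrow1}L_2^{\mathrm{TL}}\le p(1)=1-\int_0^1F(t)\dd t=\E[V]$. Payment ordering is not needed for the inequality, but I would note where it enters. Nonnegativity of the losing payments bounds the winning-side payment by the budget, and the bound is tight only if type $r$ pays nothing when losing. That is a winner-side condition, whereas exposure concerns the designated bidder's losing payment.

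For attainment, take the silent war of attrition with symmetric equilibrium $\beta(v)=\int_0^v t f(t)/(1-F(t))\dd t$. This $\beta$ is strictly increasing and diverges as $v\to1$. In this format a loser pays her own bid, so
\[
L_2^{\mathrm{TL}}(\WOA;r)=\int_0^r\beta(v)\dd F(v).
\]
As $r\uparrow1$ this tends, by monotone convergence, to $\E[\beta(V)]$. Each bidder's ex ante expected payment is $\E[p(V)]$, and a bidder's payment is $\beta$ of the lower of the two types whether she wins or loses. Hence $\E[\beta(V)]$ must equal $\E[\beta(\min(V_1,V_2))]$, which is not directly what I want, so I would instead compute directly.

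By Fubini,
\[
\E[\beta(V)]=\int_0^1 \frac{tf(t)}{1-F(t)}\,(1-F(t))\dd t=\int_0^1 tf(t)\dd t=\E[V].
\]
So $\TLE_2(\WOA)=\E[V]=p(1)$ and the bound is attained. The WOA is admissible: payments are nonnegative, the winner pays the loser's bid, which is at most her own, and type $0$ bids $0$.

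The main obstacle is the symmetry swap. I must check that $b_1(\beta(v),\beta(r))=b_2(\beta(r),\beta(v))$ holds at every relevant profile, and that ties and the divergence of $\beta$ near $1$ cause no trouble, which needs $r<1$ fixed first. I also have to justify that the interim payment equals $p(r)$ for every $r$ rather than almost every $r$. Continuity of the envelope formula, together with the equilibrium being in pure strictly increasing strategies, should handle this.
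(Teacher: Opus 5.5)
Your symmetry step is not a symmetry identity, and it hides exactly the assumption the theorem needs. With the paper's convention $b_i(x_1,x_2)$, symmetry gives $b_1(\beta(v),\beta(r))=b_2(\beta(r),\beta(v))$, which is the identity you state correctly in your last paragraph. That quantity is bidder~2's payment when \emph{she} reports $\beta(v)$ and loses to $\beta(r)$. It is the losing payment of type $v$, so it belongs to type $v$'s budget $p(v)$, not to type $r$'s.

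The integrand you actually display, $b_2(\beta(v),\beta(r))$, is the \emph{winner's} payment at the same profile $(\beta(v),\beta(r))$. Swapping one for the other sets the loser's payment equal to the winner's payment. Nothing in symmetry, nonnegativity, or revenue equivalence delivers that. The missing ingredient is payment ordering. For $v<r$, bidder~2 wins at $(\beta(v),\beta(r))$, so $b_1(\beta(v),\beta(r))\le b_2(\beta(v),\beta(r))$. Put this inequality in place of your equality. The rest of your argument then goes through: integrating over $v<r$ gives type $r$'s winning-state payment, which nonnegativity of her losing-state payments bounds by $p(r)$. That is the paper's proof.

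Your remark that payment ordering is not needed is therefore false, and Appendix~\ref{app:ordering} refutes it directly. Take $n=2$ and uniform values. The loser-only rule has $\ell^{\mathrm{LO}}(v)=v^2/(2(1-v))$ and zero winner payments. It is symmetric and efficient, has nonnegative payments, and gives the lowest type zero utility, so your argument would apply to it verbatim. Yet $L_2^{\mathrm{TL}}(M;r)=\int_0^r\ell^{\mathrm{LO}}(v)\dd v=\tfrac12\bigl(-\log(1-r)-r-r^2/2\bigr)$. At $r=0.9$ this is about $0.499>0.405=p(0.9)$, and it diverges as $r\uparrow1$.

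The attainment half is fine. Monotone convergence plus your Fubini computation $\E[\beta(V)]=\E[V]$ is a valid alternative to the paper's estimate $p(r)-L_2^{\mathrm{TL}}(\WOA;r)=(1-F(r))\beta(r)\le-(1-F(r))\log(1-F(r))\to0$. Drop the aside claiming $\E[\beta(V)]=\E[\beta(\min\{V_1,V_2\})]$, though. The right side is the ex ante payment $\E[p(V)]$, which is strictly smaller than $\E[V]$.
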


\begin{proof}
When $v<r$, bidder 1 loses and bidder 2 wins. Payment ordering gives
$b_1(\beta(v),\beta(r))\le b_2(\beta(v),\beta(r))$.
By symmetry, the right-hand side is the winning-state payment of a type-$r$ bidder facing type $v$. Its integral over $v<r$ is at most that type's total interim payment $p(r)$ because all remaining payments are nonnegative. Continuity of $p$ gives the endpoint bound. In the silent war of attrition,
\[
\beta(r)=\int_0^r\frac{t}{1-F(t)}\dd F(t)
\le-\log(1-F(r)),
\]
and
\[
p(r)-L_2^{\mathrm{TL}}(M;r)=(1-F(r))\beta(r)
\le -(1-F(r))\log(1-F(r))\longrightarrow0.
\]
Hence the top-type bound is attained even though the equilibrium bid diverges at the endpoint.
\end{proof}

Payment ordering has bite. Appendix~\ref{app:ordering} gives an efficient loser-only rule that preserves the revenue-equivalent interim payment but has infinite exposure under uniform values.

\section{The rank-local exposure frontier}

Return to arbitrary $n\ge2$, let $G(v)=F(v)^{n-1}$ be the distribution of the highest rival value, and let
\begin{equation}
p(v)=vG(v)-\int_0^vG(t)\dd t=\int_0^v t\dd G(t)
\label{eq:pn}
\end{equation}
be the revenue-equivalent interim payment \citep{Myerson1981,RileySamuelson1981}. Maintain the assumptions above and consider efficient symmetric formats with a symmetric strictly increasing equilibrium. Let $\mathcal M^{\mathrm{RL}}$ denote the class of such formats whose equilibrium-path payments are rank local. Using equilibrium types as labels, rank locality means that a losing type $v$ pays a nonnegative, nondecreasing $\ell(v)$ depending only on her own type, while a winning type $v$ facing highest rival type $m<v$ pays $W(v,m)$, depending only on $v$ and $m$. Payments are nonnegative, and payment ordering is $W(v,m)\ge\ell(m)$. The lowest type's expected utility is zero. Both schedules are measurable on their domains and have finite expected payments. This class specifies equilibrium-path payments, not off-path details of a full direct mechanism; it contains first-price, Vickrey, standard all-pay, and the silent war of attrition.

Under rank locality, Equation~\eqref{eq:tledef} reduces by monotone convergence to
\begin{equation}
\TLE_n(M)=\int_0^1\ell(v)\dd F(v).
\label{eq:tlen}
\end{equation}
Equilibrium loser revenue would instead weight the integrand by the type's loss probability. Equation~\eqref{eq:tlen} removes that weight while preserving the designated bidder's type distribution.

Type $v$ loses with probability $1-G(v)$, so revenue equivalence implies
\[
p(v)=(1-G(v))\ell(v)+\int_0^vW(v,m)\dd G(m).
\]
Because $W(v,m)\ge\ell(m)$, every $M\in\mathcal M^{\mathrm{RL}}$ induces a feasible schedule in the following relaxation over nonnegative measurable $\ell$:
\begin{equation}
\begin{aligned}
\sup_{\ell\ge0}\quad &\int_0^1\ell(v)\dd F(v)\\
\text{s.t.}\quad &(T\ell)(v):=
\begin{cases}
(1-G(v))\ell(v)+\displaystyle\int_0^v\ell(t)\dd G(t),&0\le v<1,\\[4pt]
\displaystyle\int_0^1\ell(t)\dd G(t),&v=1,
\end{cases}\\[-2pt]
&T\ell\le p.
\end{aligned}
\label{eq:primal}
\end{equation}
The endpoint definition avoids assigning a value to $0\cdot\ell(1)$. The endpoint constraint follows from the interior constraints by monotone convergence. Program~\eqref{eq:primal} drops monotonicity, the explicit winner-payment schedule, and remaining implementability restrictions, so it is an upper-bounding linear-programming relaxation rather than a characterization of the entire feasible set.

Rank locality captures fees tied to a participant's own bidding. Without it, loser payments can vary across rival-report slices that have zero equilibrium probability but are selected by the intervention, changing exposure without changing equilibrium outcomes. Rank locality makes the criterion invariant to such specifications. The bound may fail in broader classes, whose frontier remains open; Theorem~\ref{thm:pointwise}, by contrast, needs no rank-local restriction.

The equilibrium bid in the $n$-bidder silent war of attrition---and its loser-payment schedule---is
\begin{equation}
\beta_n^{\WOA}(v)=\int_0^v\frac{t}{1-G(t)}\dd G(t),\qquad v<1.
\label{eq:woa}
\end{equation}

\begin{lemma}[Primal equality]\label{lem:binds}
The silent-war-of-attrition schedule satisfies $T\beta_n^{\WOA}=p$ on $[0,1]$, and the resulting exposure is finite.
\end{lemma}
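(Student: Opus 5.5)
The plan is to verify the identity $T\beta_n^{\WOA}(v)=p(v)$ directly for $v<1$ by differentiating, handle the endpoint $v=1$ by a monotone limit, and then read finiteness of exposure off the endpoint constraint. Write $\beta=\beta_n^{\WOA}$. By \eqref{eq:woa}, $\beta$ is absolutely continuous on $[0,r]$ for every $r<1$, with $\dd\beta(t)=\frac{t}{1-G(t)}\dd G(t)$ and $\beta(0)=0$. Note that $G$ is continuous with $G(t)<1$ for $t<1$ because $f>0$. Hence the integrand is bounded on $[0,r]$ and $\beta$ is finite on $[0,1)$.

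\emph{Interior identity.} For $v<1$, integrate by parts the term $\int_0^v\beta(t)\dd G(t)$:
\[
\int_0^v\beta\dd G=-\int_0^v\beta\dd(1-G)=-(1-G(v))\beta(v)+\int_0^v(1-G(t))\dd\beta(t).
\]
Adding $(1-G(v))\beta(v)$ gives
\[
(T\beta)(v)=\int_0^v(1-G(t))\frac{t}{1-G(t)}\dd G(t)=\int_0^vt\dd G(t)=p(v),
\]
using the second expression in \eqref{eq:pn}. The boundary term at $0$ vanishes since $\beta(0)=0$. Integration by parts is legitimate on $[0,v]$ because both $\beta$ and $G$ are continuous and of bounded variation there.

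\emph{Endpoint and finiteness.} The step needing care is $v=1$, where $\beta$ diverges. Since every term is nonnegative, the interior identity gives $\int_0^v\beta\dd G\le p(v)\le p(1)\le1$. Monotone convergence then yields $\int_0^1\beta\dd G\le p(1)<\infty$. For equality, I would show $(1-G(v))\beta(v)\to0$ as $v\uparrow1$; then $\int_0^v\beta\dd G=p(v)-(1-G(v))\beta(v)\to p(1)$ by continuity of $p$. Mirroring the proof of Theorem~\ref{thm:pointwise}, bound $t\le1$ to get $\beta(v)\le\int_0^v\frac{\dd G}{1-G}=-\log(1-G(v))$. Hence
\[
(1-G(v))\beta(v)\le-(1-G(v))\log(1-G(v))\to0,
\]
since $x\log x\to0$ as $x\downarrow0$. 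This gives $(T\beta)(1)=p(1)$.

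For finiteness of exposure \eqref{eq:tlen}, $\int_0^1\beta\dd F$ must be finite even though only $\int\beta\dd G$ has been controlled. For $n=2$ the two coincide. For $n>2$, $\dd G=(n-1)F^{n-2}\dd F$ puts little weight near $0$, so I would split at some $v_0\in(0,1)$. On $[0,v_0]$, $\beta$ is bounded, so that piece is finite. On $[v_0,1)$, $\dd F\le\dd G/\bigl((n-1)F(v_0)^{n-2}\bigr)$, so that piece is bounded by a constant times $\int\beta\dd G\le p(1)$. I expect this comparison of $\dd F$ with $\dd G$ near the top to be the only point requiring a separate argument beyond the parallel with Theorem~\ref{thm:pointwise}. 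It also fixes $\TLE_n=\int_0^1\beta\dd F<\infty$.
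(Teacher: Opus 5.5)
Your proof is correct and follows the paper's argument. The interior identity comes from integration by parts, which the paper phrases as the Stieltjes product rule $\dd[(1-G)\beta]+\beta\dd G=v\dd G$. The endpoint uses the bound $\beta(v)\le-\log(1-G(v))$ to get $(1-G(v))\beta(v)\to0$, together with monotone convergence.

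The only difference is the finiteness step. The paper integrates the same logarithmic bound against $\dd F$ and substitutes $u=F(v)$, giving $\int_0^1-\log(1-u^{n-1})\dd u<\infty$, a bound that does not depend on $F$. Your split at $v_0$, comparing $\dd F$ with $\dd G$ and reusing $\int\beta\dd G=p(1)$, is equally valid but gives an $F$-dependent constant.
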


\begin{proof}
On the interior, $\dd\beta_n^{\WOA}(v)=v\dd G(v)/(1-G(v))$. The product rule for Stieltjes differentials gives
\[
\dd(T\beta_n^{\WOA})
=\dd\!\left[(1-G)\beta_n^{\WOA}\right]+\beta_n^{\WOA}\dd G
=v\dd G=\dd p.
\]
Both functions vanish at zero, so equality holds below the endpoint. Moreover,
\[
\beta_n^{\WOA}(v)\le-\log(1-G(v)),
\]
which implies $(1-G(v))\beta_n^{\WOA}(v)\to0$ as $v\uparrow1$. Taking limits in the interior equality and applying monotone convergence gives $(T\beta_n^{\WOA})(1)=p(1)$. Finally,
\[
\int_0^1\beta_n^{\WOA}(v)\dd F(v)
\le \int_0^1-\log(1-u^{n-1})\dd u<\infty.
\]
\end{proof}

\begin{theorem}[Exposure frontier]\label{thm:frontier}
For every $n\ge2$,
\[
\sup_{M\in\mathcal M^{\mathrm{RL}}}\TLE_n(M)
=\max_{\ell\ge0:T\ell\le p}\int_0^1\ell(v)\dd F(v)
=\int_0^1\beta_n^{\WOA}(v)\dd F(v).
\]
The standard $n$-bidder silent war of attrition implements the relaxed upper bound. For $n>2$, every optimal measurable schedule in the relaxed program coincides with $\beta_n^{\WOA}$ $\dd G$-almost everywhere. This is uniqueness of the loser-payment schedule, not of off-path winner payments. For $n=2$, the optimum is not unique even within $\mathcal M^{\mathrm{RL}}$; Appendix~\ref{app:n2} constructs a continuum of implementing formats.
\end{theorem}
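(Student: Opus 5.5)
The plan is linear-programming weak duality with an explicit dual probability measure, plus complementary slackness for uniqueness. It is convenient to change variables to $u=F(v)$. Then $\dd F$ becomes Lebesgue measure on $[0,1]$ and $\dd G=(n-1)u^{n-2}\dd u$.

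\textbf{Dual certificate.} I look for a nonnegative measure $\mu$ on $[0,1]$ with density $h$ (in $u$) on $[0,1)$ and an atom $a$ at $1$. The requirement is
\[
\int_0^1 (T\ell)\dd\mu\ \ge\ \int_0^1\ell\dd F\qquad\text{for every measurable }\ell\ge0 .
\]
Since $T\ell\ge0$, Tonelli's theorem lets me swap the order of integration in the Volterra term. With $S(u):=\mu([u,1])$ this gives
\[
\int (T\ell)\dd\mu=\int_0^1\ell(u)\Bigl[(1-u^{n-1})h(u)+(n-1)u^{n-2}S(u)\Bigr]\dd u .
\]
Complementary slackness should bind on the support of the silent-war-of-attrition schedule, which is all of $(0,1)$. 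So I impose that the bracket equals $1$. Using $h=-S'$, this is the ODE $-\frac{\dd}{\dd u}\bigl[(1-u^{n-1})S(u)\bigr]=1$.

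Finiteness of the atom at $u=1$ forces the solution
\[
S(u)=\frac{1-u}{1-u^{n-1}}=\frac{1}{1+u+\dots+u^{n-2}} .
\]
Hence $\mu$ is a probability measure, since $S(0)=1$. Its atom at the top is $a=S(1)=1/(n-1)$. Its density $h=-S'$ is strictly positive on $(0,1)$ when $n>2$ and identically zero when $n=2$. In the case $n=2$, $\mu$ is the point mass at $1$, and the argument collapses to Theorem~\ref{thm:pointwise}.

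\textbf{Chain of equalities.} For any feasible $\ell$, weak duality gives $\int\ell\dd F\le\int T\ell\dd\mu\le\int p\dd\mu$. By Lemma~\ref{lem:binds}, $T\beta_n^{\WOA}=p$ everywhere on $[0,1]$, so both inequalities are equalities at $\beta_n^{\WOA}$, and its value is finite by the same lemma. Hence $\beta_n^{\WOA}$ attains the maximum of the relaxed program, and the max equals $\int\beta_n^{\WOA}\dd F$.

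For the supremum over the class, two directions are needed. First, every $M\in\mathcal M^{\mathrm{RL}}$ induces a feasible $\ell$ and satisfies \eqref{eq:tlen}, so $\TLE_n(M)$ is at most the LP value. Second, the silent war of attrition itself lies in $\mathcal M^{\mathrm{RL}}$. Its loser payment $\ell=\beta_n^{\WOA}$ is nonnegative and increasing. Its winner pays $W(v,m)=\beta_n^{\WOA}(m)=\ell(m)$, so payment ordering holds with equality. Therefore its exposure equals the bound.

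\textbf{Uniqueness for $n>2$.} Let $\ell$ be any optimal schedule. Equality in $\int T\ell\dd\mu\le\int p\dd\mu$ forces $T\ell=p$ $\mu$-a.e. Because $h>0$, this means $T\ell=p$ for Lebesgue-a.e.\ $u$, and therefore $\dd G$-a.e.

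I then must show that the Volterra equation $(1-G)\ell+\int_0^{\cdot}\ell\dd G=p$, holding a.e., determines $\ell$ a.e. I will set $\Lambda(v)=\int_0^v\ell\dd G$. This function is absolutely continuous with respect to $G$ and finite below $1$, which follows from feasibility since $\Lambda\le p$. The equation becomes $(1-G)\Lambda'_G+\Lambda=p$ a.e., a linear ODE in the measure $\dd G$ with $\Lambda(0)=0$. Its unique absolutely continuous solution is $\Lambda=\int_0^{\cdot}\beta_n^{\WOA}\dd G$, so $\ell=\beta_n^{\WOA}$ $\dd G$-a.e.

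For $n=2$ the dual puts no mass on the interior, so only the endpoint constraint binds and optimality does not pin down $\ell$. I will defer the explicit continuum of implementing formats to Appendix~\ref{app:n2}.

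\textbf{Main obstacle.} I expect the hard part to be the measure-theoretic care in the uniqueness step. The constraint holds only almost everywhere, $\ell$ may be unbounded near $1$, and the endpoint definition of $T$ must be handled separately. The plan is to work throughout with the absolutely continuous primitive $\Lambda$ rather than with $\ell$ directly, and to use Tonelli's theorem for every interchange of integrals, which needs only nonnegativity.
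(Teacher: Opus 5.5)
Your proposal is correct and follows essentially the paper's proof: your tail $S(u)=(1-u)/(1-u^{n-1})$ is exactly the paper's dual tail $\Lambda(v)=(1-F(v))/(1-G(v))$ in quantile coordinates (you derive it from the complementary-slackness ODE rather than positing it). Your weak-duality chain, your use of Lemma~\ref{lem:binds}, and your uniqueness argument via the primitive $\int_0^{\cdot}\ell\dd G$ match the paper's $H/(1-G)$ argument. The one step the paper makes explicit that you only assert is the equilibrium check placing the silent war of attrition in $\mathcal M^{\mathrm{RL}}$: the payoff derivative under a misreport $z$ is $G'(z)(v-z)$, so truthful bidding is optimal. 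This is a minor omission, since the setup already states that the class contains this format.
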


\begin{proof}
Define the nonincreasing dual tail function
\[
\Lambda(v)=\left(\sum_{k=0}^{n-2}F(v)^k\right)^{-1},
\qquad v\in[0,1].
\]
For $v<1$, it equals $(1-F(v))/(1-G(v))$, and $\Lambda(1)=1/(n-1)$. Let
\[
\mu=-\dd\Lambda+\frac{1}{n-1}\delta_1,
\]
where $\delta_1$ is the unit point mass at one, so that $\mu$ is a probability measure with tail $\mu([v,1])=\Lambda(v)$. The identity $(1-G)\Lambda=1-F$ implies, as Stieltjes measures on the closed interval,
\begin{equation}
\dd F(v)=(1-G(v))\mu(\dd v)+\Lambda(v)\dd G(v).
\label{eq:dualidentity}
\end{equation}
The endpoint atom of $\mu$ is multiplied by $1-G(1)=0$, while $F$ and $G$ have no endpoint atom.

For every feasible schedule, Equation~\eqref{eq:dualidentity}, the tail representation of $\Lambda$, and Tonelli's theorem give
\begin{align*}
\int_0^1\ell(v)\dd F(v)
&=\int_{[0,1)}(1-G(v))\ell(v)\mu(\dd v)
  +\int_0^1\ell(t)\Lambda(t)\dd G(t)\\
&=\int_{[0,1]}(T\ell)(v)\mu(\dd v)\\
&\le\int_{[0,1]}p(v)\mu(\dd v).
\end{align*}
The second equality separates the endpoint atom and uses $\Lambda(t)=\mu([t,1])$; it does not require a value for $\ell(1)$. Lemma~\ref{lem:binds} makes the inequality bind for $\beta_n^{\WOA}$. A loser in the silent war of attrition pays $\beta_n^{\WOA}(v)$, while a winner facing highest rival $m$ pays $\beta_n^{\WOA}(m)$. If type $v$ instead uses the bid prescribed for type $z$, differentiating her expected payoff and using Equation~\eqref{eq:woa} gives $G'(z)(v-z)$. Hence truthful bidding is optimal. The format belongs to $\mathcal M^{\mathrm{RL}}$ and implements the relaxed optimum, proving equality in value.

For $n>2$, the interior part of $\mu$ has strictly positive density $\eta=-\Lambda'$. Equality therefore forces $T\ell=p$ almost everywhere in the interior. For an optimal schedule, put $h=\ell-\beta_n^{\WOA}$ and $H(v)=\int_0^v h(t)\dd G(t)$. Complementary slackness gives, $\dd G$-almost everywhere,
\[
(1-G(v))h(v)+H(v)=0,
\qquad
\dd H=-\frac{H}{1-G}\dd G.
\]
On every interval $[0,a]$ with $a<1$,
\[
\dd\!\left(\frac{H}{1-G}\right)=0.
\]
Since $H(0)=0$, $H$ vanishes on every such interval and hence $h=0$ $\dd G$-almost everywhere. Appendix~\ref{app:n2} constructs a continuum of distinct formats in $\mathcal M^{\mathrm{RL}}$ that attain the optimum when $n=2$.
\end{proof}

\begin{corollary}[Standard formats and bidder count]\label{cor:standard}
Let $\mathrm{FP}$, $\mathrm{Vickrey}$, $\mathrm{AP}$, and $\mathrm{WOA}$ denote first-price, Vickrey, standard all-pay, and silent-war-of-attrition formats. For every nondegenerate $F$,
\begin{align}
\TLE_n(\WOA)
&=(n-1)\int_0^1F^{-1}(u)
\frac{u^{n-2}(1-u)}{1-u^{n-1}}\dd u,
\label{eq:woavalue}\\
\TLE_n(\AP)
&=(n-1)\int_0^1F^{-1}(u)u^{n-2}(1-u)\dd u.
\label{eq:apvalue}
\end{align}
Consequently,
\[
0=\TLE_n(\mathrm{FP})=\TLE_n(\mathrm{Vickrey})
<\TLE_n(\AP)<\TLE_n(\WOA)\le\E[V],
\]
where the final inequality binds only for two bidders. Moreover, $\TLE_n(\WOA)$ and $\beta_n^{\WOA}(v)$ for every interior $v$ decrease strictly with bidder count.
\end{corollary}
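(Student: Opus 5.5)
The plan is to compute each format's exposure directly from Equation~\eqref{eq:tlen}, $\TLE_n(M)=\int_0^1\ell(v)\dd F(v)$, and then compare the resulting quantile integrals. Throughout, substitute $u=F(v)$, so $v=F^{-1}(u)$ and $G=u^{n-1}$.

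\emph{First-price and Vickrey.} Losers pay nothing, so $\ell\equiv0$ and the exposure of both formats is zero.

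\emph{Standard all-pay.} Every bidder pays her own bid, so $\ell(v)=p(v)/1=\int_0^v t\dd G(t)$. Applying Tonelli gives
\[
\int_0^1\ell\dd F=\int_0^1 t(1-F(t))\dd G(t).
\]
Substituting $u=F(t)$ turns this into Equation~\eqref{eq:apvalue}.

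\emph{Silent war of attrition.} Here $\ell=\beta_n^{\WOA}$, and the same Tonelli step gives
\[
\int_0^1 t\,\frac{1-F(t)}{1-G(t)}\dd G(t).
\]
The same substitution yields Equation~\eqref{eq:woavalue}. Lemma~\ref{lem:binds} guarantees that this integral is finite.

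\emph{The strict chain.} Nondegeneracy means $F^{-1}$ is not a.e.\ zero, and since $f>0$, $F^{-1}(u)>0$ for $u>0$. This gives $0<\TLE_n(\AP)$. The inequality $\TLE_n(\AP)<\TLE_n(\WOA)$ follows pointwise from
\[
\frac{1}{1-u^{n-1}}>1 \quad\text{on } (0,1).
\]
Alternatively, both rank-local schedules are feasible for program~\eqref{eq:primal}, and the AP one is not a.e.\ equal to $\beta_n^{\WOA}$.

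\emph{The bound $\TLE_n(\WOA)\le\E[V]$.} Write the WOA weight as
\[
w_n(u)=\frac{(n-1)u^{n-2}}{1+u+\cdots+u^{n-2}}.
\]
For $n=2$ this is $w_2\equiv1$, giving exactly $\int F^{-1}=\E[V]$, consistent with Theorem~\ref{thm:pointwise}. For $n>2$, $w_n(u)<1$ on $[0,1)$ by AM--GM: each $u^k\ge u^{n-2}$ with strict inequality for small $k$. Combined with $F^{-1}>0$ on $(0,1]$, this gives strict inequality.

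\emph{Monotonicity in $n$.} This is the main obstacle. For $\TLE_n(\WOA)$, I would show $w_{n+1}(u)<w_n(u)$ for $u\in(0,1)$. Writing $S_n=\sum_{k=0}^{n-2}u^k$, the claim is
\[
n\,u\,S_n<(n-1)S_{n+1}=(n-1)(S_n+u^{n-1}).
\]
This rearranges to
\[
(nu-n+1)S_n<(n-1)u^{n-1}.
\]
If $nu\le n-1$ the inequality is immediate. Otherwise, I expect induction or the termwise comparison $\sum_k(u^{n-1}-u^{k+1})$ to finish it. Concretely, the difference $(n-1)S_{n+1}-nuS_n$ equals $\sum_{k=0}^{n-1}(u^{k}\cdot\text{coef})$, and I would verify this is a positive combination such as $\sum_{k=0}^{n-2}(u^k-u^{n-1})$. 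Indeed,
\[
(n-1)(S_n+u^{n-1})-nuS_n=(n-1)S_n-(n-1)\bigl(S_{n+1}-1\bigr)+\ldots,
\]
so I will compute it carefully and expect it to reduce to $\sum_{k=0}^{n-2}(1-u)(\ldots)>0$. Since $F^{-1}>0$, the strict decrease of the weight integrates to a strict decrease of $\TLE_n(\WOA)$.

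For $\beta_n^{\WOA}(v)$, substitute $s=F(t)$:
\[
\beta_n^{\WOA}(v)=\int_0^{F(v)}F^{-1}(s)\,\frac{(n-1)s^{n-2}}{1-s^{n-1}}\dd s=\int_0^{F(v)}F^{-1}(s)\,\frac{w_n(s)}{1-s}\dd s.
\]
The same pointwise weight comparison $w_{n+1}<w_n$ then gives a strict decrease for every interior $v$.
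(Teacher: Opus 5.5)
Your quantile formulas, the zero exposure of winner-pay formats, the strict comparison via $1/(1-u^{n-1})>1$, the bound $w_n\le1$ with equality only at $n=2$, and the reduction of bid monotonicity to the pointwise comparison $w_{n+1}<w_n$ all follow the paper's argument.

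The gap is the step you flag yourself. You never prove $w_{n+1}(u)<w_n(u)$ in the regime $nu>n-1$. The positive combination you guess is wrong: $(n-1)S_{n+1}-nuS_n$ is not $\sum_{k=0}^{n-2}(u^k-u^{n-1})$. At $n=3$, $u=1/2$ the two sides are $5/4$ and $1$. The displayed ``Indeed'' line also ends in an ellipsis. So both monotonicity claims, for exposure and for bids, currently rest on an unverified inequality.

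The missing observation is one line. Since $uS_n=S_{n+1}-1$,
\[
(n-1)S_{n+1}-nuS_n=n-S_{n+1}=\sum_{k=1}^{n-1}\bigl(1-u^k\bigr)>0
\qquad\text{for } u\in(0,1),
\]
so no case split is needed. The paper avoids cross-multiplying altogether. It writes the kernel as $w_n(u)=\bigl(\frac{1}{n-1}\sum_{j=0}^{n-2}u^{-j}\bigr)^{-1}$, the reciprocal of the average of $1,u^{-1},\dots,u^{-(n-2)}$. Passing to $n+1$ appends $u^{-(n-1)}$, which exceeds every existing term. The average therefore strictly rises and the kernel strictly falls. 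With $F^{-1}>0$ on $(0,1]$ and finiteness from Lemma~\ref{lem:binds}, this integrates to both strict monotonicity statements.

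A smaller point concerns your alternative argument for $\TLE_n(\AP)<\TLE_n(\WOA)$ via uniqueness of the relaxed optimum. It only works for $n>2$. For $n=2$ the optimum is not unique (Appendix~\ref{app:n2}), so ``not a.e.\ equal to $\beta_n^{\WOA}$'' does not imply suboptimality there. Your primary pointwise kernel comparison already covers every $n$, so rely on that.
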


\begin{proof}
Fubini's theorem and the substitution $u=F(v)$ give Equation~\eqref{eq:woavalue}. In standard all-pay, the equilibrium bid equals the revenue-equivalent interim payment, and the same substitution gives Equation~\eqref{eq:apvalue}. Winner-pay formats have zero loser payment. The war-of-attrition kernel is the all-pay kernel multiplied by $1/(1-u^{n-1})>1$, proving the strict interior comparison. Also,
\[
\frac{(n-1)u^{n-2}(1-u)}{1-u^{n-1}}
=\frac{(n-1)u^{n-2}}{1+u+\cdots+u^{n-2}}\le1,
\]
with equality throughout only when $n=2$. Finally, write the same kernel as
\[
k_n(u):=\left(\frac{1}{n-1}\sum_{j=0}^{n-2}u^{-j}\right)^{-1}.
\]
Its denominator increases strictly when the bidder count rises. This proves exposure monotonicity. Moreover,
\[
(\beta_n^{\WOA})'(v)=\frac{vf(v)}{1-F(v)}k_n(F(v)),
\]
so the same comparison proves bid monotonicity.
\end{proof}

The two-bidder contest is therefore the most exposed configuration on the frontier. In that case the intervention fixes the sole rival's report, and the dual measure collapses to a top-type point mass. With more bidders, its positive interior density pins down the loser-payment schedule almost everywhere.

\section{Implications}

The frontier identifies payment architectures that most expose participants when commitment to honest execution is weak. Targeted-loss exposure is a payment stress test, not a welfare or damages measure. Within the payment-ordered rank-local class, exposure cannot exceed the designated bidder's expected value. Appendix~\ref{app:ordering} shows that a loser-only rule outside payment ordering can instead have infinite exposure, so the finite frontier is a conservative benchmark for such formats. Attaching payment to winning, or making losing-side fees credibly refundable, lowers exposure by construction.

The common-shock extension in Appendix~\ref{app:shock} also clarifies the behavioral interpretation. Allowing bidders to anticipate forced loss changes equilibrium bids and makes the top war-of-attrition bid finite, but it preserves the standard-format ordering for every positive intervention probability below certainty. The benchmark therefore does not depend on assigning exactly zero probability to intervention.

Revenue equivalence fixes each type's average payment, not whether payment is attached to winning or losing. Payment location is therefore the design variable behind the targeted-loss exposure frontier.

\clearpage
\clearpage
\appendix
\begin{center}
{\Large\bfseries Technical Appendix}
\end{center}
\addcontentsline{toc}{section}{Technical Appendix}
\section{Additional results}

\subsection{Nonuniqueness within the rank-local class}\label{app:n2}

For $n=2$, $G=F$ and the dual measure is $\delta_1$. Hence the objective equals the endpoint constraint:
\[
(T\ell)(1)=\int_0^1\ell(v)\dd F(v).
\]
Any feasible schedule that makes this constraint bind is optimal. To exhibit nonuniqueness, write $\beta=\beta_2^{\WOA}$ and, for $a\in(0,1)$, define
\[
B(a)=\int_0^a\beta(v)\dd F(v),
\qquad
c_a=\frac{B(a)}{1-F(a)},
\]
and
\[
\ell_a(v)=
\begin{cases}
0,&0\le v<a,\\
\beta(v)+c_a,&a\le v<1.
\end{cases}
\]
For $v<a$, $(T\ell_a)(v)=0$. For $a\le v<1$, Lemma~\ref{lem:binds} gives
\[
(T\ell_a)(v)=p(v)-B(a)+c_a(1-F(a))=p(v).
\]
The same equality holds at the endpoint. Thus every $\ell_a$ is feasible and optimal but differs from the silent-war-of-attrition schedule on a set of positive measure.

These schedules are implementable, not merely solutions of the relaxation. For $0\le m<v<1$, define the winner-payment schedule
\[
W_a(v,m)=
\begin{cases}
p(v)/F(v),&0<v<a,\\
\ell_a(m),&a\le v<1.
\end{cases}
\]
The unused value at $v=0$ may be set to zero. At the top report, set $W_a(1,m)=\ell_a(m)$; as in the silent war of attrition, no finite loser fee at one is required because a top report cannot strictly lose. Since $\beta$ is increasing and $c_a\ge0$, $\ell_a$ is nonnegative and nondecreasing, and both schedules are measurable. When $v<a$, every lower rival type also lies below $a$, so $\ell_a(m)=0$ and payment ordering holds. When $v\ge a$, it holds with equality. Moreover, $p(v)/F(v)\le v$, so all displayed winner payments are finite. If a bidder reports $z$, her interim expected payment is
\[
R_a(z)=(1-F(z))\ell_a(z)+\int_0^z W_a(z,m)\dd F(m)=p(z).
\]
For $z<a$, this follows from $\ell_a(z)=0$ and the constant winner payment $p(z)/F(z)$; for $z\ge a$, it follows from $T\ell_a=p$. Hence a type $x$ reporting $z$ obtains expected utility
\[
U_a(x,z)=F(z)x-p(z),
\qquad
\frac{\partial U_a(x,z)}{\partial z}=f(z)(x-z).
\]
Truthful reporting is therefore globally optimal, and the lowest type earns zero. The direct-report format with loser payment $\ell_a$ and winner payment $W_a$ belongs to $\mathcal M^{\mathrm{RL}}$. Since different values of $a$ produce different loser-payment schedules, the frontier is nonunique within the original rank-local class when $n=2$.

\subsection{Why payment ordering matters}\label{app:ordering}

Remove payment ordering and consider the loser-only rule---the $n$-bidder version of the sad-loser auction discussed by \citet{RileySamuelson1981}---with
\[
\ell^{\mathrm{LO}}(v)=\frac{p(v)}{1-G(v)},
\qquad
W^{\mathrm{LO}}(v,m)=0.
\]
If type $v$ mimics type $z$, her expected payoff is
\[
u(v,z)=G(z)v-(1-G(z))\ell^{\mathrm{LO}}(z)=G(z)v-p(z).
\]
Since the derivative with respect to $z$ has the sign of $v-z$, truthful reporting is globally optimal. The allocation is efficient, the lowest type earns zero, and the interim expected payment remains $p(v)$. Under uniform values,
\[
\ell^{\mathrm{LO}}(v)
=\frac{(n-1)v^n}{n(1-v^{n-1})}
\sim\frac{1}{n(1-v)},
\]
so its targeted-loss exposure is infinite. Rank locality alone therefore does not yield a finite frontier.

\subsection{Anticipated discretion under a common shock}\label{app:shock}

The baseline evaluates equilibrium bidding without intervention and then applies the targeted-loss experiment. This subsection instead lets two genuine bidders anticipate a seller-controlled shock. After bids are committed, an unbeatable shill receives the object with probability $\varepsilon\in[0,1)$, so both genuine bidders lose; otherwise the announced rule is executed. The shock is independent of values and bids. Bidders first play the corresponding symmetric equilibrium, after which the same targeted-loss experiment is applied. Write $\TLE_{2,\varepsilon}(M)$ for the resulting exposure.

\begin{proposition}[Anticipated discretion]\label{prop:shock}
For first-price and Vickrey auctions,
\[
\TLE_{2,\varepsilon}(\mathrm{FP})
=\TLE_{2,\varepsilon}(\mathrm{Vickrey})=0.
\]
For standard all-pay,
\[
\beta_\varepsilon^{\AP}(v)=(1-\varepsilon)\int_0^v t\dd F(t),
\qquad
\TLE_{2,\varepsilon}(\AP)
=(1-\varepsilon)\frac12\E[\min\{V_1,V_2\}].
\]
For the silent war of attrition,
\[
(\beta_\varepsilon^{\WOA})'(v)
=\frac{(1-\varepsilon)vf(v)}
{\varepsilon+(1-\varepsilon)(1-F(v))},
\qquad \beta_\varepsilon^{\WOA}(0)=0,
\]
and
\[
\TLE_{2,\varepsilon}(\WOA)
=\int_0^1\frac{(1-\varepsilon)t f(t)(1-F(t))}
{\varepsilon+(1-\varepsilon)(1-F(t))}\dd t.
\]
For every $\varepsilon\in(0,1)$,
\[
0=\TLE_{2,\varepsilon}(\mathrm{FP})
=\TLE_{2,\varepsilon}(\mathrm{Vickrey})
<\TLE_{2,\varepsilon}(\AP)
<\TLE_{2,\varepsilon}(\WOA)<\E[V].
\]
Both positive exposures decrease strictly with $\varepsilon$, and the silent-war-of-attrition exposure converges to $\E[V]$ as $\varepsilon$ decreases to zero.
\end{proposition}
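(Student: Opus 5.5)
The plan is to derive each format's equilibrium under the shock from a first-order condition and then read off exposure. The targeted-loss experiment is applied after equilibrium bidding, so in each case the exposure is $\int_0^1\ell_\varepsilon(v)\dd F(v)$, where $\ell_\varepsilon$ is the loser payment of a type-$v$ bidder whose loss is forced by the fixed top rival report. Winner-pay formats charge nothing in that state. Hence $\TLE_{2,\varepsilon}(\mathrm{FP})=\TLE_{2,\varepsilon}(\mathrm{Vickrey})=0$ once a symmetric increasing equilibrium exists. For first price the shock just scales the win probability by $1-\varepsilon$, which leaves the usual bid unchanged. For Vickrey truthful bidding stays weakly dominant.

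\textbf{Standard all-pay.} Type $v$ bidding as $z$ gets $(1-\varepsilon)F(z)v-\beta(z)$. The first-order condition $\beta'(z)=(1-\varepsilon)zf(z)$ with $\beta(0)=0$ gives $\beta_\varepsilon^{\AP}$. The loser pays her own bid, so by Fubini
\[
\TLE=\int_0^1\beta_\varepsilon^{\AP}\dd F=(1-\varepsilon)\int_0^1 t(1-F(t))\dd F(t)=(1-\varepsilon)\tfrac12\E[\min\{V_1,V_2\}].
\]
The last step uses the density $2f(1-F)$ of the minimum.

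\textbf{Silent war of attrition.} Type $v$ bidding as $z$ wins the object with probability $(1-\varepsilon)F(z)$. When the shill takes the object, both genuine bidders lose and pay their own bids, so the payments are as follows:
\begin{itemize}
\item[(i)] $\beta(m)$ when $m<z$ and there is no shock;
\item[(ii)] $\beta(z)$ when $m<z$ and there is a shock;
\item[(iii)] $\beta(z)$ when $m>z$.
\end{itemize}
This assumes the shock leaves the payment rule intact, which is the natural reading of ``otherwise the announced rule is executed'' and is what the stated ODE requires. The expected payment is therefore
\[
(1-\varepsilon)\int_0^z\beta\dd F+\bigl[\varepsilon F(z)+1-F(z)\bigr]\beta(z).
\]
Differentiating gives
\[
(1-\varepsilon)vf(z)=(1-\varepsilon)\beta f-\varepsilon... ,
\]
and after the terms collect this becomes $\beta'(z)\bigl[\varepsilon+(1-\varepsilon)(1-F(z))\bigr]=(1-\varepsilon)zf(z)$ at $z=v$. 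This matches the stated derivative. Global optimality follows because the derivative of payoff in $z$ has the sign of $v-z$, exactly as in Lemma~\ref{lem:binds}. Then Fubini, $\int_0^1\beta\dd F=\int_0^1\beta'(t)(1-F(t))\dd t$, yields the displayed $\TLE_{2,\varepsilon}(\WOA)$. For $\varepsilon>0$ the denominator is at least $\varepsilon$, so the top bid is finite and there are no limit issues.

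\textbf{Comparisons and limits.} All of these compare integrands against $t f(t)(1-F(t))\dd t$. The all-pay integrand is $(1-\varepsilon)tf(1-F)$. The WOA integrand is $tf(1-F)\cdot\rho$ with
\[
\rho=\frac{(1-\varepsilon)}{\varepsilon+(1-\varepsilon)(1-F)}.
\]
\begin{itemize}
\item AP below WOA: $\varepsilon+(1-\varepsilon)(1-F)<1$ on the interior, so $\rho>1-\varepsilon$ and AP is strictly smaller.
\item WOA below $\E[V]$: $\rho\cdot(1-F)<1$ is equivalent to $(1-\varepsilon)(1-F)<\varepsilon+(1-\varepsilon)(1-F)$, which holds since $\varepsilon>0$. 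Hence WOA is strictly less than $\int tf=\E[V]$.
\item AP positive: this follows from nondegeneracy.
\item Monotonicity in $\varepsilon$: for AP it is immediate. For WOA, write $\rho\cdot(1-F)=(1-F)/\bigl(\varepsilon/(1-\varepsilon)+1-F\bigr)$, which is strictly decreasing in $\varepsilon$.
\item Limit: as $\varepsilon\downarrow0$ the integrand increases to $tf$, so monotone convergence gives $\E[V]$, consistent with Theorem~\ref{thm:pointwise}.
\end{itemize}

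The main obstacle is the WOA payment accounting under the shock, namely whether bidders pay their own bids when the shill wins. I would fix that convention first and check that it reproduces the stated ODE before doing anything else.
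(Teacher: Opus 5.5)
Your proposal is correct and follows the paper's proof essentially step for step. It uses the same deviation payoffs for each format; your silent-war-of-attrition payment term $[\varepsilon F(z)+1-F(z)]\beta(z)$ is the paper's $[\varepsilon+(1-\varepsilon)(1-F(z))]\beta(z)$. It then applies the same first-order conditions, Fubini for exposure, and the same kernel comparisons against $tf(t)(1-F(t))$. The only blemish is the garbled intermediate display in the war-of-attrition differentiation, which should read $\partial_z\Pi(z\mid v)=(1-\varepsilon)vf(z)-[\varepsilon+(1-\varepsilon)(1-F(z))]\beta'(z)$ once the $\beta(z)f(z)$ terms cancel. Your global-optimality check via the sign of $v-z$ is a small addition that the paper leaves implicit.
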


\begin{proof}
Let $\Pi(z\mid v)$ be the expected payoff of type $v$ from using the bid prescribed for type $z$. In first-price,
\[
\Pi(z\mid v)=(1-\varepsilon)F(z)(v-\beta(z)),
\]
so the common factor leaves the equilibrium bid unchanged. In Vickrey, a forced loser pays zero and truthful bidding remains weakly dominant conditional on execution. In all-pay,
\[
\Pi(z\mid v)=(1-\varepsilon)F(z)v-\beta(z),
\]
whose first-order condition gives the stated bid; Fubini gives its exposure.

In the silent war of attrition,
\begin{align*}
\Pi(z\mid v)
&=(1-\varepsilon)F(z)v
-[\varepsilon+(1-\varepsilon)(1-F(z))]\beta(z)\\
&\quad -(1-\varepsilon)\int_0^z\beta(t)f(t)\dd t.
\end{align*}
Differentiation gives the displayed differential equation, and Fubini gives the exposure formula. Apart from the common factor $tf(t)$, the all-pay and war-of-attrition exposure kernels are
\[
(1-\varepsilon)(1-F(t))
\quad\text{and}\quad
\frac{(1-\varepsilon)(1-F(t))}
{\varepsilon+(1-\varepsilon)(1-F(t))}.
\]
The second is strictly larger than the first, strictly below one, and strictly decreasing in $\varepsilon$. Monotone convergence yields the zero-shock limit.
\end{proof}

\section*{Disclosures}

\paragraph{Funding.}
This work was supported by Japan Society for the Promotion of Science KAKENHI (JP21H04979) and Japan Science and Technology Agency ERATO (JPMJER2301). The funders had no role in the study design, analysis, writing, or decision to submit the article.

\paragraph{Competing interests.}
The author has no relevant financial or non-financial interests to disclose.

\paragraph{Data availability.}
No empirical data were used or generated in this study.

\paragraph{Acknowledgments.}
This paper supersedes the targeted-loss portion of UTMD Working Paper No.~116; a companion paper studies participation risk and entry screening. I thank Hitoshi Matsushima, Atsushi Iwasaki, Michihiro Kandori, Daisuke Oyama, Kenjiro Asami, Kazuyuki Higashi, Satoshi Kasamatsu, Daiki Kishishita, Satoshi Nakada, Munenori Nakasato, Shunya Noda, and Kyohei Okumura for comments.

\paragraph{Use of generative AI.}
The author used OpenAI ChatGPT and Codex for language editing, organization, LaTeX preparation, bibliographic organization, and verification-code assistance. The author reviewed all outputs and takes full responsibility for the paper.

\bibliographystyle{plainnat}
\bibliography{references}

\end{document}